\def\twocolumnmode{1}
\def\breakifsinglecolumn{\if\twocolumnmode0\nonumber\\&\qquad{}\fi}\else\def\breakifsinglecolumn{}\fi
\theoremstyle{general} \newtheorem{theorem}{Theorem}
\theoremstyle{general} 
\theoremstyle{general} 
\theoremstyle{general} 
\theoremstyle{general} \newtheorem{p-corollary}{Corollary}[proposition]
\theoremstyle{general} 
\theoremstyle{general} 
\theoremstyle{remark}  \newtheorem{remark}{Remark}
\title{Precoding and Decoding Schemes for Downlink MIMO-RSMA with Simultaneous Diagonalization and User Exclusion}
\author{\IEEEauthorblockN{Rouaa Diab\rlap{\textsuperscript{\IEEEauthorrefmark{2}}}, Aravindh Krishnamoorthy\rlap{\textsuperscript{\IEEEauthorrefmark{2}\IEEEauthorrefmark{1}}},\,\,  and Robert Schober\rlap{\textsuperscript{\IEEEauthorrefmark{2}}}\\ \IEEEauthorblockA{\small \IEEEauthorrefmark{2}Friedrich-Alexander-Universit\"{a}t Erlangen-N\"{u}rnberg, Germany\\
\IEEEauthorrefmark{1}Fraunhofer Institute for Integrated Circuits (IIS) Erlangen, Germany}}\thanks{The authors acknowledge the financial support by the Federal Ministry of Education and Research of Germany in the programme of ``Souverän. Digital. Vernetzt.'' joint project 6G-RIC, project identification number: PIN 16KISK023.}\vspace{-0.5cm}}
\tikzset{new spy style/.style={spy scope={%
			magnification=2,
			size=0.5cm,
			connect spies,
			every spy on node/.style={
				rectangle,
				draw,
			},
			every spy in node/.style={
				draw,
				rectangle,
			}
		}
	}
} 
\begin{document}
\maketitle

\begin{abstract}
In this paper, we consider the precoder design for downlink multiple-input multiple-output (MIMO) rate-splitting multiple access (RSMA) systems. The proposed scheme with simultaneous diagonalization (SD) decomposes the MIMO channel matrices of the users into scalar channels via higher-order generalized singular value decomposition for the common message (CM) and block diagonalization (BD) for the private messages, thereby enabling low-complexity element-by-element successive interference cancellation (SIC) and decoding at the receivers. Furthermore, the proposed SD MIMO-RSMA overcomes a critical limitation in RSMA systems, whereby the achievable rate of the CM is restricted by the users with weak effective MIMO channel for the CM, by excluding a subset of users from decoding the CM. We formulate a non-convex weighted sum rate (WSR) optimization problem for SD MIMO-RSMA and solve it via successive convex approximation to obtain a locally optimal solution. Our simulation results reveal that, for both perfect and imperfect CSI, the proposed SD MIMO-RSMA with user exclusion outperforms baseline MIMO-RSMA schemes and linear BD precoding.
\end{abstract}

\section{Introduction}
The next generation multiple access (NGMA) schemes for 6th generation (6G) and beyond downlink communication systems must accommodate a large number of receivers and achieve high spectral and energy efficiencies, reliability, and flexibility in terms of resource allocation \cite{Liu2021}. However, conventional orthogonal multiple access (OMA) schemes have limited performance and can support only a small number of users as orthogonal resources are allocated to each user in the system. Hence, in recent years, the research on non-orthogonal schemes for future downlink wireless systems has gained momentum.

Non-orthogonal multiple access (NOMA) \cite{Liu2017}, based on superposition coding (SC) at the transmitter and successive-interference cancellation (SIC) at the receivers, has been shown to achieve high performance and user fairness \cite{Liu2017}. However, for NOMA, the decoding complexity increases significantly with the number of users. In fact, in a system with $K$ users, $\mathcal{O}(K^2)$ SIC stages are needed in total. 

On the other hand, rate-splitting multiple access (RSMA) \cite{Mao2018} is a non-orthogonal multiple access (MA) technique that generalizes orthogonal and non-orthogonal MA schemes \cite{Mao2018}, \cite{Clerckx2020}. In RSMA, each user's bit stream is split into a common message (CM), which is decoded by all users, and private messages (PMs), which are decoded only by the intended user. Furthermore, linear precoders are utilized at the transmitter and single-stage SIC at the receivers for decoding the CM and the PMs. The primary advantages of RSMA are its ability to manage interference in a robust manner and its improved performance for imperfect channel state information (CSI) at the base station (BS) \cite{Joudeh2016}, \cite{Mao2020}. This flexibility and robustness allows RSMA to achieve an enhanced performance over other MA schemes in many practical scenarios.

However, in multiple-input multiple-output (MIMO) RSMA systems, which can take advantage of the spatial degrees of freedom (DoFs) offered by MIMO channels to enhance performance, SIC and joint decoding of the elements of the MIMO symbol vector at the receivers entail a high complexity, limiting the practical deployment of MIMO-RSMA. Nevertheless, SIC and decoding complexity at the receivers can be substantially decreased by element-by-element decoding, which can be enabled by decomposing the effective MIMO-RSMA channels of the users into parallel scalar channels. While simultaneous diagonalization (SD) of the MIMO-RSMA channels for the PMs can be performed via conventional schemes such as block diagonalization (BD) \cite{Spencer2004} and zero forcing (ZF) \cite{Wiesel2008} precoding, corresponding techniques for SD of all MIMO user channels, as is needed for the CM, have not been proposed, yet. In \cite{Flores2019}, a MIMO-RSMA scheme combining a scalar CM, which is trivially simultaneously diagonalized, and BD precoding has been reported. However, for the MIMO-RSMA scheme in \cite{Flores2019} the CM cannot take advantage of the spatial DoFs offered by MIMO channels.

Furthermore, in RSMA, as the CM must be decoded by all users, the achievable rate of the CM is limited by the users with weak effective MIMO channels. Hence, developing precoding and decoding schemes that can overcome this limitation is of high interest.

To this end, in this paper, we propose novel SD MIMO-RSMA precoding and decoding schemes which can simultaneously diagonalize the effective MIMO channels for both the CM and the PMs of the users, thereby enabling element-by-element SIC and decoding at the receivers and substantially reducing the decoding complexity. We utilize BD precoding \cite{Wiesel2008} for SD of the effective MIMO channels of the PMs owing to its low-complexity and enhanced performance for critically-loaded systems \cite{Spencer2004}, and higher-order generalized singular value decomposition (HO-GSVD)-based precoding \cite{Ponnapalli2011}, \cite{VanLoan2015} for SD of the effective MIMO channel for the CM. Furthermore, we allow the exclusion of a subset of users from decoding the CM. Moreover, we formulate a non-convex WSR optimization problem and solve it via successive convex approximation (SCA) \cite{Razaviyayn2014} to obtain a local optimum. Lastly, we utilize computer simulations to characterize the performance of the proposed SD MIMO-RSMA. The main contributions of this paper can be summarized as follows. 
\begin{itemize}
    \item We propose a novel SD MIMO-RSMA scheme which utilizes HO-GSVD \cite{Ponnapalli2011, VanLoan2015} and BD \cite{Spencer2004} for SD of the effective MIMO channels for the CM and PMs, respectively. The proposed scheme also allows the exclusion of a subset of the users from decoding the CM.
    \item We formulate a non-convex WSR optimization problem and solve it via SCA \cite{Razaviyayn2014}.
    \item Lastly, through computer simulations, we compare the performance of the proposed SD MIMO-RSMA with the performances of MIMO-RSMA in \cite{Flores2019}, linear BD precoding \cite{Spencer2004}, and the DPC UB \cite{Vishwanath2003}. Our simulation results reveal that, for both perfect and imperfect CSI, the proposed scheme outperforms MIMO-RSMA in \cite{Flores2019} and linear BD precoding.
\end{itemize}

The remainder of this paper is organized as follows. In Section \ref{sec:prelim}, we present the system model and a brief description of HO-GSVD. The proposed SD MIMO-RSMA precoding and decoding schemes are described in Section \ref{section:proposed}. The WSR maximization problem and a solution based on SCA are provided in Section \ref{section:wsr}. Simulation results are presented in Section \ref{section:sims}, and the paper is concluded in Section \ref{section:conc}.

\emph{Notation:} Boldface capital letters $\boldsymbol{X}$ and boldface lowercase letters $\boldsymbol{x}$ denote matrices and vectors, respectively. $\boldsymbol{X}^\mathrm{H}$, $\boldsymbol{X}^{+}$, and tr($\boldsymbol{X}$) denote the Hermitian transpose, pseudo-inverse, and trace of matrix $\boldsymbol{X}$, respectively. $\mathbb{C}^{m \times n}$ and $\mathbb{R}^{m \times n}$ denote the set of $m \times n$ matrices with complex-valued and real-valued entries, respectively. The $(i,j)$-th entry of matrix $\boldsymbol{X}$ is denoted by $(\boldsymbol{X})_{i,j}$ and the $i$-th element of vector $\boldsymbol{x}$ is denoted by $(\boldsymbol{x})_{i}$. $\boldsymbol{I}_{n}$ denotes the $n \times n$ identity matrix. The circularly symmetric complex-valued Gaussian distribution with mean $\boldsymbol{\mu}$ and covariance matrix $\boldsymbol{\Sigma}$ is denoted by $\mathcal{CN}(\boldsymbol{\mu},\boldsymbol{\Sigma})$; $\sim$ stands for ``distributed as". $\mathrm{E}\mkern-\thinmuskip\left[\cdot\right]$ denotes statistical expectation.

\section{Preliminaries}
\label{sec:prelim}
In this section, we present the system model and briefly describe the HO-GSVD, which is utilized for SD of the effective MIMO channel for the CM.

\subsection{System Model}
\label{section:sys_model}
We consider a downlink MIMO system comprising $K$ users equipped with $M_k, k = 1,\dots, K$, antennas and a BS equipped with $N$ antennas. We study the critically-loaded regime, i.e., $\sum_{k=1}^{K} M_{k} = N.$ 
The MIMO channel matrix between the BS and user $k$ is modeled as $\frac{1}{\sqrt{L_{k}}}\boldsymbol{H}_{k} \in \mathbb{C}^{M_{k} \times N},$ $k=1,\dots,K,$ where $L_{k}$ models the free-space path loss for user $k$, and the elements of matrix $\boldsymbol{H}_{k}$ model the small-scale fading effects. 

Let $\textit{K} = \{1,\dots,K\}.$ Furthermore, let $\textit{K}_{\mathrm{c}} = \{b_{1},\dots,b_{L}\},$ $L \leq K,$ denote the set of users who decode the CM. Hence, users $\textit{K} \setminus \textit{K}_{\mathrm{c}}$ do not decode the CM. Furthermore, let $\textrm{W}_{k}$ denote the message intended for user $k.$ We construct the CM and the private MIMO symbol vectors as follows \cite{Mao2018}.

For users $k \in \textit{K}_{\mathrm{c}},$ message $\textrm{W}_{k}$ is split into two parts, $\textrm{W}_{k} = \{\textrm{W}_{k}^{\mathrm{c}},\textrm{W}_{k}^{\mathrm{p}}\}$. Here, $\textrm{W}_{k}$ is split such that user $k$ is assigned a fraction $v_k, 0 \leq v_k \leq 1, \sum_{k \in \textit{K}_{\mathrm{c}}} v_k = 1,$ of the available bits in the CM. On the other hand, for users $k \not\in \textit{K}_{\mathrm{c}},$ we set $v_k = 0.$ Next, message parts $\textrm{W}_{k}^{\mathrm{c}},k \in \textit{K}_{\mathrm{c}},$ are jointly encoded into the CM symbol vector, $\boldsymbol{s}_{\mathrm{c}} \in \mathbb{C}^{M \times 1},$ $M = \textrm{min}(M_{b_1},\dots, M_{b_L}).$ 

Furthermore, the PM of user $k,$ i.e., $\textrm{W}_{k}^{\mathrm{p}}$ if $k \in \textit{K}_{\mathrm{c}}$ or $\textrm{W}_{k}$ if $k \not\in \textit{K}_{\mathrm{c}},$ is encoded into the private symbol vector $\boldsymbol{s}_{k} \in \mathbb{C}^{M_k \times 1}, k=1,\dots,K.$ Here, we assume that $\mathrm{E}\mkern-\thinmuskip\left[\boldsymbol{s}_{\mathrm{c}}\boldsymbol{s}_\mathrm{c}^\mathrm{H}\right] = \boldsymbol{I}_{M}$ and $\mathrm{E}\mkern-\thinmuskip\left[\boldsymbol{s}_{k}\boldsymbol{s}_{k}^\mathrm{H}\right] = \boldsymbol{I}_{M_{k}}, k=1,\dots,K.$

The CM and private symbol vectors, $\boldsymbol{s}_{\mathrm{c}}$ and $\boldsymbol{s}_{k}, k=1,\dots,K,$ are precoded using matrices $\boldsymbol{P}_{\mathrm{c}} \in \mathbb{C}^{N \times M}$ and $\boldsymbol{P}_{k} \in \mathbb{C}^{N \times M_{k}}, k=1,\dots,K,$ respectively. The transmit power constraint at the BS is given by
\begin{equation}
\label{eqn:pow_cos}
\textrm{tr}(\boldsymbol{P}_{\mathrm{c}}\boldsymbol{P}_{\mathrm{c}}^{\mathrm{H}}) + \sum_{k=1}^{K}\textrm{tr}(\boldsymbol{P}_{k}\boldsymbol{P}_{k}^{\mathrm{H}}) \leq P_{\mathrm{T}},
\end{equation}
where $P_{\mathrm{T}}$ is the available transmit power, and the received signal at user $k$ is given by
\begin{equation}
\label{eqn:abstract_receive}
\boldsymbol{y}_{k} = \frac{1}{\sqrt{L_{k}}}\boldsymbol{H}_{k}(\boldsymbol{P}_{\mathrm{c}}\boldsymbol{s}_{\mathrm{c}} + \sum_{k'=1}^{K}\boldsymbol{P}_{k'}\boldsymbol{s}_{k'}) + \boldsymbol{z}_{k},
\end{equation}
where $\boldsymbol{z}_{k} \sim \mathcal{CN}(\boldsymbol{0},\sigma^{2}\boldsymbol{I}_{M_{k}}) \in \mathbb{C}^{M_{k} \times 1}$ is the complex additive white Gaussian noise (AWGN) vector at user $k.$
 
\begin{remark}
In this paper, we assume that the users know their own MIMO channel matrices perfectly. On the other hand, in Section \ref{section:sims}, we evaluate the performance of the proposed scheme for both prefect and imperfect knowledge of the MIMO channel matrices of the users at the BS.
\end{remark}

\subsection{Higher-Order Generalized Singular Value Decomposition}
HO-GSVD \cite{Ponnapalli2011}, \cite{VanLoan2015} is a matrix decomposition that can simultaneously diagonalize an arbitrary number of matrices having the same number of columns. The decomposition is given in the following theorem.

\begin{theorem}
\label{th:hogsvd}
Let $\boldsymbol{A}_i \in \mathbb{C}^{m_i\times n},i=1,\dots,S,$ be $S$ matrices having full column rank, and $m_i \geq n, i=1,\dots,S.$ Then, there exist left-invertible matrices $\boldsymbol{U}_i \in \mathbb{C}^{m_{i} \times n}, i=1,\dots,S,$ invertible matrix $\boldsymbol{V} \in \mathbb{C}^{n \times n},$ and diagonal matrices $\boldsymbol{\Sigma}_i \in \mathbb{R}^{n \times n}$ such that:
\begin{equation}
\boldsymbol{A}_{i} = \boldsymbol{U}_{i}\boldsymbol{\Sigma}_{i}\boldsymbol{V}^{\mathrm{H}}, \label{eqn:HO_basis}
\end{equation}
where the unit-norm columns of $\boldsymbol{U}_i,i=1,\dots,S,$ and $\boldsymbol{V}$ contain the left and right higher-order singular vectors of $\boldsymbol{A}_{i},i=1,\dots,S,$ and matrices $\boldsymbol{\Sigma}_{i},i=1,\dots,S,$ contain their $n$ higher-order generalized singular values (HO-GSVs) on the main diagonal.
\end{theorem}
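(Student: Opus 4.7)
The plan is to follow the Ponnapalli--Alter--Golub construction of the HO-GSVD \cite{Ponnapalli2011,VanLoan2015}. The full-column-rank hypothesis immediately yields that each Gram matrix $\boldsymbol{G}_{i} = \boldsymbol{A}_{i}^{\mathrm{H}}\boldsymbol{A}_{i}\in\mathbb{C}^{n\times n}$ is Hermitian positive definite and hence invertible. From these I would form the higher-order balance matrix
\begin{equation*}
\boldsymbol{M} = \frac{1}{S(S-1)}\sum_{1\le i<j\le S}\bigl(\boldsymbol{G}_{i}\boldsymbol{G}_{j}^{-1}+\boldsymbol{G}_{j}\boldsymbol{G}_{i}^{-1}\bigr),
\end{equation*}
and invoke the central spectral property of $\boldsymbol{M}$ established in \cite{Ponnapalli2011}: it has real eigenvalues and admits a complete basis of $n$ linearly independent right eigenvectors. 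Stacking these as the columns of an invertible $\boldsymbol{X}\in\mathbb{C}^{n\times n}$, I would take $\boldsymbol{V}=\boldsymbol{X}^{-\mathrm{H}}$ (invertible by construction) as the common right factor, so that $\boldsymbol{V}^{\mathrm{H}}=\boldsymbol{X}^{-1}$.

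Given $\boldsymbol{V}$, the remaining factors are recovered by column normalization. Form $\boldsymbol{Y}_{i}=\boldsymbol{A}_{i}\boldsymbol{X}\in\mathbb{C}^{m_{i}\times n}$; since $\boldsymbol{A}_{i}$ has full column rank and the columns of $\boldsymbol{X}$ are nonzero, every column of $\boldsymbol{Y}_{i}$ is nonzero. Setting $\sigma_{i,j}=\|(\boldsymbol{Y}_{i})_{:,j}\|$, $\boldsymbol{\Sigma}_{i}=\mathrm{diag}(\sigma_{i,1},\dots,\sigma_{i,n})\in\mathbb{R}^{n\times n}$, and $\boldsymbol{U}_{i}=\boldsymbol{Y}_{i}\boldsymbol{\Sigma}_{i}^{-1}$ yields unit-norm columns of $\boldsymbol{U}_{i}$, left-invertibility of $\boldsymbol{U}_{i}$ inherited from the full column rank of $\boldsymbol{Y}_{i}$, and the identity $\boldsymbol{A}_{i}=\boldsymbol{U}_{i}\boldsymbol{\Sigma}_{i}\boldsymbol{X}^{-1}=\boldsymbol{U}_{i}\boldsymbol{\Sigma}_{i}\boldsymbol{V}^{\mathrm{H}}$. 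To enforce unit-norm columns of $\boldsymbol{V}$ as stated in the theorem, I would extract the real diagonal $\boldsymbol{D}$ of column norms of $\boldsymbol{V}$ and replace $\boldsymbol{V}\leftarrow\boldsymbol{V}\boldsymbol{D}^{-1}$, $\boldsymbol{\Sigma}_{i}\leftarrow\boldsymbol{\Sigma}_{i}\boldsymbol{D}$; since $\boldsymbol{D}$ is real, $\boldsymbol{\Sigma}_{i}$ remains real and diagonal, and all the required properties are retained.

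The substantive content and principal obstacle is the spectral claim for $\boldsymbol{M}$: that its $n$ right eigenvectors are linearly independent, so that a single common right basis $\boldsymbol{V}$ indeed exists and simultaneously reduces every $\boldsymbol{A}_{i}$ to the form $\boldsymbol{U}_{i}\boldsymbol{\Sigma}_{i}$. This is the higher-order analogue of the cosine--sine structure underlying the classical two-matrix GSVD and is where the technical effort in \cite{Ponnapalli2011,VanLoan2015} is concentrated. Once this spectral property is in hand, the rest of the construction is routine column rescaling; by contrast, an inductive route starting from the $S=2$ GSVD does not obviously preserve simultaneous diagonality across more than two matrices, so the balance-matrix construction of $\boldsymbol{M}$ appears essential to the argument.
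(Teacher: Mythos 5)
Your proposal is correct and follows essentially the same route as the paper, which simply defers the result to \cite{Ponnapalli2011} and \cite{VanLoan2015}: you reconstruct their balance-matrix construction explicitly and, like the paper, lean on the cited spectral property of $\boldsymbol{M}$ (real eigenvalues, complete eigenbasis) for the only non-routine step, with the column-normalization bookkeeping carried out correctly. No gap beyond what the references themselves are invoked to cover.
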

\begin{proof}
The proof directly follows from \cite{Ponnapalli2011}. See also \cite{VanLoan2015}.
\end{proof}

\begin{remark}
	The HO-GSVD reduces to the GSVD \cite{VanLoan1976} for $K=2$ and the singular value decomposition for $K=1,$ see \cite{Ponnapalli2011,VanLoan2015} for details.
\end{remark}

\section{Proposed SD MIMO-RSMA Precoding and Decoding Schemes}
\label{section:proposed}
In this section, we present the proposed SD MIMO-RSMA precoding and decoding schemes which simultaneously diagonalize the effective MIMO channel matrices of the users for both the CM and the PMs. Additionally, we derive  expressions for the resulting achievable user rates. 

\subsection{Precoder Design}
First, we consider the precoding for the CM. As the CM has to be decoded by all users $k \in \textit{K}_{\mathrm{c}},$ the precoded signals must lie in the intersection of the row space of the MIMO channel matrices of the users, i.e., $\operatorname{col}(\boldsymbol{H}_{b_1}^\mathrm{H}) \cap \dots \cap \operatorname{col}(\boldsymbol{H}_{b_L}^\mathrm{H}).$ To this end, we define $\boldsymbol{G}_{\mathrm{c}} \in \mathbb{C}^{N\times M},$ a matrix whose columns contain the first $M$ right singular vectors of matrix $\boldsymbol{H}_{\mathrm{c}} = 
	\begin{pmatrix}
		\boldsymbol{H}_{b_{1}}^\mathrm{T} &
		\dots &
		\boldsymbol{H}_{b_{L}}^\mathrm{T} 
	\end{pmatrix}^\mathrm{T}.$ 
As desired, the columns of $\boldsymbol{G}_{\mathrm{c}}$ lie in $\operatorname{col}(\boldsymbol{H}_{b_1}^\mathrm{H}) \cap \dots \cap \operatorname{col}(\boldsymbol{H}_{b_L}^\mathrm{H}).$ Next, in order to simultaneously diagonalize the effective MIMO channel matrices of the users, we utilize the HO-GSVD, given in Theorem \ref{th:hogsvd}, of matrices $\{\boldsymbol{H}_k\boldsymbol{G}_{\mathrm{c}}, k \in \textit{K}_{\mathrm{c}}\}$ to obtain:
\begin{equation}
	\label{eqn:HO_GSVD_no_orth}
	\boldsymbol{E}_{k}^+(\boldsymbol{H}_{k}\boldsymbol{G}_{\mathrm{c}})\boldsymbol{V}_{\mathrm{c}}^{-\mathrm{H}} = \boldsymbol{D}_{k},
\end{equation}
for $k \in \textit{K}_{\mathrm{c}},$ where $\boldsymbol{D}_{k} \in \mathbb{R}^{M \times M}$ is a diagonal matrix containing the higher-order generalized singular values of $\{\boldsymbol{H}_k\boldsymbol{G}_{\mathrm{c}}, k \in \textit{K}_{\mathrm{c}}\}$ on the main diagonal, and the columns of matrices $\boldsymbol{E}_{k} \in \mathbb{C}^{M_k \times M}$ and $\boldsymbol{V}_{\mathrm{c}} \in \mathbb{C}^{M \times M}$ are the corresponding higher-order left and right singular vectors, respectively. Here, $\boldsymbol{E}_{k}^+ \in \mathbb{C}^{M \times M_k}$ denotes the left inverse of $\boldsymbol{E}_{k}, k=1,\dots,K.$

Now, the precoding matrix for the CM is chosen as follows:
\begin{equation}
\label{eqn:Pc_no_orth}
\boldsymbol{P}_{\mathrm{c}} = \boldsymbol{G}_{\mathrm{c}}\boldsymbol{V}_{\mathrm{c}}^{-{\mathrm{H}}}\boldsymbol{\Delta}_{\mathrm{c}}^{\frac{1}{2}},
\end{equation}
where $\boldsymbol{\Delta}_{\mathrm{c}} = \textrm{diag}(p_{\mathrm{c},1},\dots, p_{\mathrm{c},M}) \in \mathbb{R}^{M \times M}$ is the diagonal power loading matrix. Furthermore, based on (\ref{eqn:HO_GSVD_no_orth}), the detection matrix for the CM at user $k \in \textit{K}_{\mathrm{c}}$ is chosen as $\boldsymbol{E}_{k}^+.$

On the other hand, for the PMs, the precoder matrices are designed based on conventional BD \cite{Spencer2004}, as described in the following. In accordance with \cite{Spencer2004}, for user $k,$ let $\boldsymbol{N}_{k} \in \mathbb{C}^{N \times M_{k}}$ denote the null space of matrix $\boldsymbol{\hat{H}}_{k} = \begin{pmatrix}
	\boldsymbol{{H}}_{1}^\mathrm{T} &
	\dots &
	\boldsymbol{{H}}_{k-1}^\mathrm{T} &
	\boldsymbol{{H}}_{k+1}^\mathrm{T} &
	\dots &
	\boldsymbol{{H}}_{K}^\mathrm{T}
\end{pmatrix}^\mathrm{T}.$ 
Then, the precoding matrix for the PM of user $k$ is chosen as follows:
\begin{equation}
	\label{eqn:priv_precoder}
	\boldsymbol{P}_{k} = \boldsymbol{N}_{k}\boldsymbol{V}_{k}\boldsymbol{\Delta}_{k}^{\frac{1}{2}}, 
\end{equation}
and the detection matrix is chosen as $\boldsymbol{U}_{k}^\mathrm{H}.$ Here, $\boldsymbol{U}_{k},\boldsymbol{V}_{k} \in \mathbb{C}^{M_{k} \times M_{k}}$ are unitary matrices containing the left and right singular vectors of $\boldsymbol{H}_{k}\boldsymbol{N}_{k},$ and $\boldsymbol{\Delta}_k = \textrm{diag}(p_{k,1} ,\dots, p_{k,M_k}) \in \mathbb{R}^{M_k \times M_k}$ is the diagonal power loading matrix. Now, based on the above precoders, the power constraint in (\ref{eqn:pow_cos}) can be simplified as follows:
\begin{equation}
\label{eqn:updated_power_constraint}
\mathrm{tr}({\boldsymbol{V}_{\mathrm{c}}^{\mathrm{-H}}\boldsymbol{\Delta}_\mathrm{c}\boldsymbol{V}_{\mathrm{c}}^\mathrm{-1}}) + \sum_{k = 1}^{K} \sum_{l=1}^{M_{k}} p_{k,l} \leq P_\mathrm{T},
\end{equation}
and the received signal for user $k$ in (\ref{eqn:abstract_receive}) can be simplified to
\begin{equation}
\label{eqn:receive_alt}
\boldsymbol{y}_{k} = \frac{1}{\sqrt{L_{k}}}\boldsymbol{H}_{k}(\boldsymbol{P}_{\mathrm{c}}\boldsymbol{s}_{\mathrm{c}} + \boldsymbol{P}_{k}\boldsymbol{s}_{k}) + \boldsymbol{z}_{k}.
\end{equation}

\begin{remark}
	We note that, in (\ref{eqn:receive_alt}), the inter-user inference (IUI) from the others PMs is completely eliminated due to BD \cite{Spencer2004}.
\end{remark}

\subsection{Decoding Scheme}
In this section, we describe the proposed two-stage decoding scheme. For users $k \in \textit{K}_\mathrm{c},$ first, the CM is decoded. Next, the interference from the decoded CM symbols is eliminated via SIC and the PM is decoded. On the other hand, for $k \not\in \textit{K}_\mathrm{c},$ the PM is directly decoded treating the CM as noise.

\subsubsection{Decoding the Common Message}
For user $k \in \textit{K}_\mathrm{c},$ as mentioned earlier, $\boldsymbol{E}_{k}^+$ is utilized as the detection matrix to obtain the signal
\begin{equation}
\label{eqn:decode_comm}
\scalebox{0.95}{\mbox{\ensuremath{\displaystyle \boldsymbol{\Tilde{y}}_{k} = \boldsymbol{E}_{k}^+\boldsymbol{y}_{k} = \frac{1}{\sqrt{L_{k}}}(\boldsymbol{D}_{k}\boldsymbol{\Delta}_{\mathrm{c}}^{\frac{1}{2}}\boldsymbol{s}_{\mathrm{c}} + \boldsymbol{W}_{k}\boldsymbol{\Delta}_{k}^{\frac{1}{2}}\boldsymbol{s}_{k}) + \boldsymbol{\Tilde{z}}_{k}}}},
\end{equation}
where $\boldsymbol{\Tilde{z}}_{k} \sim \mathcal{CN}(\boldsymbol{0},\sigma^{2}\boldsymbol{E}_{k}^+(\boldsymbol{E}_{k}^+)^\mathrm{H}) \in \mathbb{C}^{M \times 1}$ is the processed AWGN noise vector and $\boldsymbol{W}_{k} = \boldsymbol{E}_{k}^+(\boldsymbol{H}_{k}\boldsymbol{N}_{k})\boldsymbol{V}_{k}$ is the matrix capturing the interference from the PM. 

\begin{remark}
	We note that, in (\ref{eqn:decode_comm}), due to the HO-GSVD, for all users, the effective MIMO channels for the CM are simultaneously diagonalized.
\end{remark}

Next, the $l$-th element of the common symbol vector $(\boldsymbol{s}_\mathrm{c})_{l}, l=1,\dots,M,$ is decoded element-by-element based on $(\boldsymbol{\Tilde{y}}_{k})_{l},$ i.e.,
\begin{align}
\label{eqn:comm_ele_ele}
\scalebox{0.95}{\mbox{\ensuremath{\displaystyle (\boldsymbol{\Tilde{y}}_{k})_{l}}}} &\scalebox{0.95}{\mbox{\ensuremath{\displaystyle {}= \frac{1}{\sqrt{L_{k}}}(\boldsymbol{D}_{k})_{l,l}\sqrt{(\boldsymbol{\Delta}_{\mathrm{c}})_{l,l}}(\boldsymbol{s}_{\mathrm{c}})_{l}}}} \nonumber\\&\quad\scalebox{0.95}{\mbox{\ensuremath{\displaystyle {}+ \frac{1}{\sqrt{L_{k}}} \sum_{i=1}^{M_{k}}(\boldsymbol{W}_{k})_{l,i}\sqrt{(\boldsymbol{\Delta}_{k})_{i,i}}(\boldsymbol{s}_{k})_{i} + (\boldsymbol{\tilde{z}}_{k})_{l}}}},
\end{align}
treating the interference from the PM as noise. Next, after successful decoding of the CM, SIC is performed to eliminate the interference from the received signal, $\boldsymbol{y}_{k},$ to obtain the interference-free signal
\begin{equation}
	\label{eqn:priv_SIC}    
	\scalebox{0.95}{\mbox{\ensuremath{\displaystyle \boldsymbol{{y}}'_{k} = \boldsymbol{y}_{k} - \frac{1}{\sqrt{L_{k}}}\boldsymbol{H}_{k}\boldsymbol{P}_{\mathrm{c}}\boldsymbol{s}_\mathrm{c} = \frac{1}{\sqrt{L_{k}}}\boldsymbol{H}_{k}\boldsymbol{P}_{k}\boldsymbol{s}_{k} + \boldsymbol{z}_{k}}}}.
\end{equation}

\begin{remark}
	We note that the SIC shown in (\ref{eqn:priv_SIC}) can also be performed element-by-element.
\end{remark}

\subsubsection{Decoding the Private Messages}
First, we describe the procedure for user $k \in \textit{K}_{\mathrm{c}}.$ Based on $\boldsymbol{{y}}'_{k}$ in (\ref{eqn:priv_SIC}), and utilizing $\boldsymbol{U}_{k}^{\mathrm{H}}$ as the detection matrix, we obtain signal
\begin{equation}
\label{eqn:det_priv_SIC}
\scalebox{0.95}{\mbox{\ensuremath{\displaystyle \boldsymbol{{y}}''_{k} = \boldsymbol{U}_{k}^{\mathrm{H}}\boldsymbol{y}'_{k} =  \frac{1}{\sqrt{L_{k}}}\boldsymbol{\Sigma}_{k}\boldsymbol{\Delta}_{k}^{\frac{1}{2}}\boldsymbol{s}_{k} + \boldsymbol{z}''_{k}}}},
\end{equation}
where $\boldsymbol{z}''_{k} \sim \mathcal{CN}(\boldsymbol{0},\sigma^{2}\boldsymbol{I}_{M_{k}}) \in \mathbb{C}^{M_{k} \times 1}$ because $\boldsymbol{U}_{k}^{\mathrm{H}}, k=1,\dots,K,$ is unitary.

\begin{remark}
	We note that, due to BD \cite{Spencer2004}, the effective MIMO channel of user $k$ for the PM is simultaneously diagonalized. 
\end{remark}

Next, as for the CM, the elements of symbol vector $(\boldsymbol{s}_{k})_{l}, l=1,\dots,M_k,$ are decoded element-by-element based on the $l$-th element of $\boldsymbol{{y}}''_{k},$ i.e.,
\begin{equation}
\label{eqn:priv_kcp_ele_ele}
(\boldsymbol{{y}}''_{k})_{l} = \frac{1}{\sqrt{L_{k}}}(\boldsymbol{\Sigma}_{k})_{l,l}\sqrt{(\boldsymbol{\Delta}_{k})_{l,l}}(\boldsymbol{s}_{k})_{l} + (\boldsymbol{z}''_{k})_{l},
\end{equation}
for $l = 1,\dots, M_{k}$.

On the other hand, for users $k \not\in \textit{K}_{\mathrm{c}},$ the PM is directly decoded treating the interference from the CM as noise. To this end, as earlier, $\boldsymbol{U}_{k}^{\mathrm{H}}$ is utilized as the detection matrix to obtain the processed signal
\begin{equation}
\label{eqn:Alt_1_priv}    
\boldsymbol{{y}}''_{k} = \boldsymbol{U}_{k}^{\mathrm{H}}\boldsymbol{y}_{k} = \frac{1}{\sqrt{L_{k}}}(\boldsymbol{\Sigma}_{k}\boldsymbol{\Delta}_{k}^{\frac{1}{2}}\boldsymbol{s}_{k} + \boldsymbol{W}_{\mathrm{c},k}\boldsymbol{\Delta}_{\mathrm{c}}^{\frac{1}{2}}\boldsymbol{s}_{\mathrm{c}}) + \boldsymbol{z}''_{k}, 
\end{equation}
where $\boldsymbol{W}_{\mathrm{c},k} = \boldsymbol{U}_{k}^{\mathrm{H}}(\boldsymbol{H}_{k}\boldsymbol{G}_{\mathrm{c}})\boldsymbol{V}_{\mathrm{c}}^{-\mathrm{H}} \in \mathbb{C}^{M_{k} \times M}$ is the interference caused by the CM to the PM.

Next, treating the interference from the CM as noise, the $l$-th element of the symbol vector $(\boldsymbol{s}_{k})_{l}$ is decoded based on the $l$-th element of $\boldsymbol{{y}}''_{k},$ i.e., 
\begin{align}
\label{eqn:alt_1_ele_ele}
\scalebox{0.95}{\mbox{\ensuremath{\displaystyle (\boldsymbol{{y}}''_{k})_{l}}}} &= \scalebox{0.95}{\mbox{\ensuremath{\displaystyle \frac{1}{\sqrt{L_{k}}}(\boldsymbol{\Sigma}_{k})_{l,l}\sqrt{(\boldsymbol{\Delta}_{k})_{l,l}}(\boldsymbol{s}_{k})_{l}}}} \nonumber\\&\quad\scalebox{0.95}{\mbox{\ensuremath{\displaystyle {}+ \frac{1}{\sqrt{L_{k}}}\sum_{i=1}^{M}(\boldsymbol{W}_{\mathrm{c},k})_{l,i}\sqrt{(\boldsymbol{\Delta}_{\mathrm{c}})_{i,i}}(\boldsymbol{s}_{\mathrm{c}})_{i}+ (\boldsymbol{z}''_{k})_{l}}}},
\end{align}
for $l = 1,\dots, M_{k}$.

\begin{remark}
	We note that, unlike for users $k \not\in \textit{K}_{\mathrm{c}},$ for users $k \in \textit{K}_{\mathrm{c}},$ decoding and SIC of the CM must be carried out even if no CM bits are allocated for user $k,$ i.e., $v_k = 0.$ 
\end{remark}

\begin{remark}
	For $k \not\in \textit{K}_{\mathrm{c}},$ the proposed precoding and decoding schemes are purely linear. Hence, the proposed SD MIMO-RSMA is a hybrid scheme combining non-linear and linear decoding.
\end{remark}

\begin{remark}
	In the following, in order to simplify our notation, we set $\boldsymbol{W}_{\mathrm{c},k} = \boldsymbol{0}$ for users $k \in \textit{K}_{\mathrm{c}}.$
\end{remark}

\subsection{Achievable Rates}
\label{sec:ar}
For the CM and users $k \in \textit{K}_{\mathrm{c}},$ based on (\ref{eqn:comm_ele_ele}), the achievable rate of the $l$-th element of the common symbol vector, $(\boldsymbol{s}_{\mathrm{c}})_{l}$, $l = 1,\dots, M,$ is given by
\begin{equation}
\label{eqn:rate_common}
    \scalebox{0.9}{\mbox{\ensuremath{\displaystyle R_{\mathrm{c},l}^{k} = \log_{2}\bigg{(}1+\frac{\frac{1}{L_{k}}(\boldsymbol{D}_{k})_{l,l}^{2}p_{\mathrm{c},l}}{\frac{1}{L_{k}}\sum_{i=1}^{M_{k}}|(\boldsymbol{W}_{k})_{l,i}|^{2}p_{k,i} + \sigma^{2}(\boldsymbol{E}_{k}^+(\boldsymbol{E}_{k}^+)^\mathrm{H})_{l,l}}\bigg{)}}}}.
\end{equation}
Now, as the CM must be decoded by all users $k \in \textit{K}_{\mathrm{c}},$ the rate of the $l$-th element is chosen as
\begin{equation}
\label{eqn:comm_min}
 R_{\mathrm{c},l} = \min_{k \in \textit{K}_\mathrm{c}}(R^{k}_{\mathrm{c},l}).
\end{equation}

Next, for the PMs, based on (\ref{eqn:priv_kcp_ele_ele}) and (\ref{eqn:alt_1_ele_ele}), the achievable rate of the $l$-th element of the private symbol vector $(\boldsymbol{s}_{k})_{l}, l=1,\dots,M_k,$ of user $k,$ is given by
\begin{equation}
\label{eqn:generic_priv}
    R_{k,l} = \log_{2}\bigg{(}1 + \frac{\frac{1}{L_{k}}(\boldsymbol{\Sigma}_{k})_{l,l}^{2} p_{k,l}}{ \frac{1}{L_{k}}\sum_{i=1}^{M}|(\boldsymbol{W}_{\mathrm{c},k})_{l,i}|^{2}p_{\mathrm{c},i} + \sigma^{2}}\bigg{)},
\end{equation}
where $\boldsymbol{W}_{\mathrm{c},k} = \boldsymbol{0}$ for $k \in \textit{K}_{\mathrm{c}}.$

\subsection{Computational Complexity}
For BD and HO-GSVD, computing the precoding and detection matrices for the CM and the PM entails a complexity of $\operatorname{\mathcal{O}}(N^3)$ per user \cite{Spencer2004,Ponnapalli2011,VanLoan2015}. Hence, the proposed SD MIMO-RSMA scheme entails a total complexity of $\operatorname{\mathcal{O}}(K N^3).$

\section{WSR Maximization}
\label{section:wsr}
In this section, we formulate the WSR maximization problem and present a locally-optimal solution via SCA \cite{Razaviyayn2014}. 

\subsection{Weighted Sum Rate}
Based on the achievable rates in Section \ref{sec:ar}, the WSR is given as follows:
\begin{equation}
\label{eqn:WSR_def}
R_{\mathrm{wsr}}= \sum_{k=1}^{K}\sum_{l=1}^{M}w_{k}v_{k}{R}_{\mathrm{c},l} + \sum_{k=1}^{K}\sum_{l=1}^{M_{k}}w_{k}{R}_{k,l},
\end{equation}
where $w_k,k=1,\dots,K,$ $0 \leq w_{k} \leq 1,$ and $\sum_{k=1}^{K}w_{k} = 1,$ denote fixed weights, which can be chosen to adjust the rates of the users during power allocation \cite[Sec. 4]{WangGiannakis2011}, and $v_k, k=1,\dots,K,$ as described in Section \ref{section:sys_model}, denotes the fraction of the CM assigned to user $k.$

Now, the WSR maximization problem can be formulated as follows:
\begin{maxi!}
		{\substack{{p}_{\mathrm{c},l} \geq {0}, \forall\,l \\ {p}_{k,l} \geq {0}, \forall\,k, \forall\,l}}
		{R_\mathrm{wsr}}
		{\label{opt:wsr_noncov}}
		{\mathrlap{R_\mathrm{wsr}^\star =}\nonumber\\}
		\addConstraint{\text{C1: } \scalebox{0.85}{\mbox{\ensuremath{\displaystyle \mathrm{tr}({\boldsymbol{V}_{\mathrm{c}}^{\mathrm{-H}}\boldsymbol{\Delta}_\mathrm{c}\boldsymbol{V}_{\mathrm{c}}^\mathrm{-1}}) + \sum_{k = 1}^{K} \sum_{l=1}^{M_{k}} p_{k,l} \leq P_\mathrm{T}}}}.\label{cons:t31}}{}{}
\end{maxi!}

Solving (\ref{opt:wsr_noncov}) optimally entails a high complexity as it is non-convex in the optimization variables $p_{\mathrm{c},l}, l = 1,\dots, M$, and $p_{k,l}, l = 1,\dots, M_{k}, k=1,\dots,K.$ Hence, in the following, we solve (\ref{opt:wsr_noncov}) via SCA \cite{Razaviyayn2014} to obtain a low-complexity locally-optimal solution.

\subsection{Successive Convex Approximation}
In SCA, first, an inner convex problem is formulated by replacing the non-convex parts of $R_\mathrm{wsr}$ in (\ref{opt:wsr_noncov}) by their first-order approximations. Next, the inner convex problem is solved iteratively until convergence, up to a numerical tolerance $\epsilon.$ In each iteration, the first-order approximations in the objective function are updated based on the results of the previous iteration, and the revised objective function is utilized. The procedure is described in detail below.

Let $\boldsymbol{p}_\mathrm{c} = \{p_{\mathrm{c},l}, l = 1,\dots, M\},$ $\boldsymbol{p}_k = \{p_{k,l}, l = 1,\dots, M_k\}, k=1,\dots,K,$ denote sets containing the optimization variables for the CM and the PMs, respectively. Furthermore, Let $\boldsymbol{p}_\mathrm{c}^{(n)}$ and $\boldsymbol{p}_k^{(n)}, k=1,\dots,K,$ denote the values of the sets in the $n$-th iteration, $n=1,2,\dots,$ with initial values $\boldsymbol{p}_\mathrm{c}^{(0)} = \boldsymbol{0}$ and $\boldsymbol{p}_k^{(0)} = \boldsymbol{0}, k=1,\dots,K.$

\begin{figure*}
	\begin{align}
		\scalebox{0.9}{\mbox{\ensuremath{\displaystyle \tilde{R}_{\mathrm{c},l}^{k,(n)}}}} &{}= \scalebox{0.8}{\mbox{\ensuremath{\displaystyle \textrm{log}_{2}\bigg{(} \frac{1}{{L_{k}}}\sum_{i=1}^{M_{k}}|(\boldsymbol{W}_{k})_{l,i}|^{2} p_{k,i} + \sigma^{2}(\boldsymbol{E}_{k}^+(\boldsymbol{E}_{k}^+)^\mathrm{H})_{l,l} + \frac{1}{{L_{k}}}(\boldsymbol{D}_{k})_{l,l}^{2}p_{\mathrm{c},l}\bigg{)}}}}
			\breakifsinglecolumn
			- \scalebox{0.9}{\mbox{\ensuremath{\displaystyle {\frac{ \frac{1}{{L_{k}}}\sum_{i=1}^{M_{k}}|(\boldsymbol{W}_{k})_{l,i}|^{2}p_{k,i}}{\textrm{log}(2)\big{(}\frac{1}{{L_{k}}}\sum_{j=1}^{M_{k}}|(\boldsymbol{W}_{k})_{l,j}|^{2}p_{k,j}^{(n-1)} + \sigma^{2}(\boldsymbol{E}_{k}^+(\boldsymbol{E}_k^+)^\mathrm{H})_{l,l}\big{)}}}}}}, \label{eqn:rclkn} \\
		\scalebox{0.9}{\mbox{\ensuremath{\displaystyle \tilde{R}_{k,l}^{(n)}}}} &{}= \scalebox{0.8}{\mbox{\ensuremath{\displaystyle \textrm{log}_{2}\bigg{(}\frac{1}{L_{k}}\sum_{i=1}^{M}|(\boldsymbol{W}_{\mathrm{c},k})_{l,i}|^{2} p_{\mathrm{c},i} + \sigma^{2} + \frac{1}{L_{k}}|\boldsymbol{(\Sigma}_{k})_{l,l}|^{2}p_{k,l}\bigg{)}}}}
			\breakifsinglecolumn
		- \scalebox{0.9}{\mbox{\ensuremath{\displaystyle {\frac{ \frac{1}{L_{k}}\sum_{i=1}^{M}|(\boldsymbol{W}_{\mathrm{c},k})_{l,i}|^{2} p_{\mathrm{c},i}}{\textrm{log}(2)\big{(}\frac{1}{L_{k}}\sum_{j=1}^{M}|(\boldsymbol{W}_{\mathrm{c},k})_{l,j}|^{2}p_{\mathrm{c},j}^{(n-1)} + \sigma^{2}\big{)}}}}}} \label{eqn:rkln}
	\end{align}
	\hrulefill
	\vspace{-0.5cm}
\end{figure*}

In iteration $n,$ $n=1,2,\dots,$ a convex approximation of $R_\mathrm{wsr}$ is constructed, as follows:
\begin{align}
	\tilde{R}_\mathrm{wsr}^{(n)} = \sum_{k=1}^{K}\sum_{l=1}^{M}w_{k}v_{k}\tilde{R}_{\mathrm{c},l}^{(n)} + \sum_{k=1}^{K}\sum_{l=1}^{M_{k}}w_{k}\tilde{R}_{k,l}^{(n)},
\end{align}
where $\tilde{R}_{\mathrm{c},l}^{(n)} = \min_{k \in \textit{K}_\mathrm{c}} \tilde{R}_{\mathrm{c},l}^{k,(n)},$ and the non-constant parts of $\tilde{R}_{\mathrm{c},l}^{k,(n)},$ $l=1,\dots,M, k=1,\dots,K,$ and $\tilde{R}_{k,l}^{(n)},$ $l=1,\dots,M_k, k=1,\dots,K.$ are shown on the top of the next page in (\ref{eqn:rclkn}) and (\ref{eqn:rkln}), respectively.

Next, an inner convex-optimization problem is constructed based on $\tilde{R}_\mathrm{wsr}^{(n)}$ as follows.
\begin{maxi!}
	{\substack{{p}_{\mathrm{c},l} \geq {0}, \forall\,l \\ {p}_{k,l} \geq {0}, \forall\,k, \forall\,l}}
	{\tilde{R}_\mathrm{wsr}^{(n)}}
	{\label{opt:wsr_conv}}
	{R_\mathrm{wsr}^{(n)\star} =}
	\addConstraint{\text{C1.}}{}{}
\end{maxi!}

Now, employing SCA \cite{Razaviyayn2014}, the inner convex optimization problem, (\ref{opt:wsr_conv}), is solved using standard convex optimization tools \cite{Boyd2004} to obtain the optimal value $R_\mathrm{wsr}^{(n)\star}$ and the corresponding optimal solution $\boldsymbol{p}_\mathrm{c}^{(n)}, \boldsymbol{p}_k^{(n)}, k=1,\dots,K.$ Then, the optimal solution is utilized to construct the first-order approximation $\tilde{R}_\mathrm{wsr}^{(n+1)}$ for the next iteration. This process is repeated, which gradually tightens the first-order approximation around a local optimum of (\ref{opt:wsr_noncov}). Hence, the obtained optimal values $R_\mathrm{wsr}^{(n)\star}, n=1,2,\dots,$ converge to a local optimum of (\ref{opt:wsr_noncov}) \cite{Razaviyayn2014}. In our case, the iterations are continued until convergence up to a numerical tolerance $\epsilon.$ The algorithm is summarized in Algorithm \ref{alg:1}.

\begin{figure}
\begin{algorithm}[H]
\small
\begin{algorithmic}[1]
\STATE {Initialize $R_\mathrm{wsr}^{(0)\star} = -\infty,$ $\boldsymbol{p}^{(0)}_\mathrm{c} = \boldsymbol{0}, \boldsymbol{p}^{(0)}_{k} = \boldsymbol{0}, k = 1,\dots, K$, iteration index $n = 0$, and numerical tolerance $\epsilon$}
\REPEAT
\STATE {$n\gets n+1$} 
\STATE {Solve (\ref{opt:wsr_conv}) to obtain optimal value $R_\mathrm{wsr}^{(n)\star}$ and the corresponding solution $\boldsymbol{p}^{(n)}_\mathrm{c},\boldsymbol{p}_{k}^{(n)},k = 1,\dots, K$}
\UNTIL {$|R_\mathrm{wsr}^{(n)\star} - R_\mathrm{wsr}^{(n-1)\star}| \leq \epsilon$}
\STATE {Return $\boldsymbol{p}_\mathrm{c}^{(n)}, \boldsymbol{p}_{k}^{(n)}$, $k=1,\dots, K,$ as the solution}
\end{algorithmic}
\caption{Algorithm for solving (\ref{opt:wsr_noncov}) via SCA.}
\label{alg:1}
\end{algorithm}
\vspace{-0.5cm}
\end{figure}

\section{Simulation Results}
\label{section:sims}
In this section, we evaluate the performance of the proposed SD MIMO-RSMA and compare it with that of MIMO-RSMA in \cite{Flores2019}, linear BD precoding \cite{Spencer2004}, and the DPC UB \cite{Vishwanath2003} for perfect and imperfect CSI. As mentioned in Section \ref{section:sys_model}, we consider a critically-loaded downlink MIMO system with $N=\sum_{k=1}^{K}M_{k}$. For our simulations, the number of users is $K=4,$ and the numbers of antennas at the BS and the users are $N = 16$ and $M_k = 4, k=1,\dots,K,$ respectively. The free-space path loss is set to $L_{k} = d_{k}^2$, where ${d}_{k}$ is the distance (in meters) between the BS and user $k,$ the noise variance is set to $\sigma^{2} = -35$ dBm, and the numerical tolerance $\epsilon$ in Algorithm \ref{alg:1} is $\epsilon = 10^{-6}$.

Furthermore, for the proposed scheme, we maximize the WSR over all $2^K-1$ possible combinations of user indices in set $\textit{K}_\mathrm{c}.$ That is, for each combination of user indices in $\textit{K}_\mathrm{c},$ we set $v_{k} = \frac{1}{|\textit{K}_\mathrm{c}|},k \in \textit{K}_\mathrm{c},$ otherwise $v_{k} = 0,$ and utilize Algorithm \ref{alg:1} to maximize the WSR. Then, we choose the combination leading to the maximum sum rate (SR). Additionally, we evaluate the SR for the proposed scheme without user exclusion, i.e., for $\textit{K}_\mathrm{c} = \textit{K}.$ On the other hand, for MIMO-RSMA in \cite{Flores2019}, linear BD precoding, and the DPC UB, we obtain the SR based on \cite{Flores2019}, \cite{Spencer2004}, and \cite{Vishwanath2003}, respectively. For all considered schemes, we average the SR over multiple MIMO channel realizations so as to obtain a $99\%$ confidence interval of $\pm 1$ BPCU.

\subsection{Channel Model}
In our simulations, we consider the case where the MIMO channel matrices of the users are correlated, e.g., when the users are located close to each other, which is most beneficial for RSMA because the correlations prevent low-complexity linear schemes such as BD \cite{Spencer2004} and ZF \cite{Wiesel2008} precoding from fully exploiting the spatial DoFs. For generating the MIMO channel matrices of the users, we define $\alpha \in \mathbb{R},$ $0 \leq \alpha \leq 1,$ to specify the correlation and matrices $\boldsymbol{G}_{k}, k = 1,\dots,K,$ whose elements are drawn from independent and identically distributed (i.i.d.) Gaussian distributions, i.e., $({\boldsymbol{G}_{k}})_{i,j} \sim \mathcal{CN}(0,1)$, $i=1,\dots,M_k, j=1,\dots,N, k=1,\dots,K,$ to model the small-scale fading effects. The MIMO channel matrices of the users are constructed as follows: $\boldsymbol{H}_1 = \boldsymbol{G}_1,$ $\boldsymbol{H}_2 = \boldsymbol{G}_2,$ $\boldsymbol{H}_3 = \alpha\boldsymbol{G}_1 + \sqrt{1-\alpha^2}\boldsymbol{G}_3,$ and $\boldsymbol{H}_4 = \alpha\boldsymbol{G}_2 + \sqrt{1-\alpha^2}\boldsymbol{G}_4,$ see also the channel model utilized in \cite{Mao2018}. We note that the elements of $\boldsymbol{H}_k$ are i.i.d. Gaussian distributed with zero mean and unit variance. However, $\mathrm{E}\mkern-\thinmuskip\left[(\boldsymbol{H}_1)_{i,j} (\boldsymbol{H}_3)_{i,j}\right] = \mathrm{E}\mkern-\thinmuskip\left[(\boldsymbol{H}_2)_{i,j} (\boldsymbol{H}_4)_{i,j}\right] = \alpha, i=1,\dots,M_1, j=1,\dots,N.$ Uncorrelated channel matrices are obtained by setting $\alpha=0.$ Next, for imperfect CSI, we model the estimated MIMO channel matrices of user $k$ at the BS as follows:
\begin{align}
	\tilde{\boldsymbol{H}}_k = \boldsymbol{H}_{k} + \Delta\boldsymbol{H}_{k}, \label{eqn:imodel}
\end{align}
where the elements of $\Delta\boldsymbol{H}_{k} \in \mathbb{C}^{M_k\times N},k=1,\dots,K,$ are i.i.d. Gaussian distributed with zero mean and variance $\mu^2,$ i.e., $(\Delta\boldsymbol{H}_{k})_{i,j} \sim \mathcal{CN}(0,\mu^2), i=1,\dots,M_k,j=1,\dots,N.$ Moreover, for the proposed and the baseline schemes, to study the impact of imperfect CSI at the BS, we obtain the precoder and detection matrices based on $\tilde{\boldsymbol{H}}_k, k=1,\dots,K,$ but compute the SR based on $\boldsymbol{H}_{k}, k=1,\dots,K,$ taking into account the additional IUI introduced due to the imperfections.

\begin{figure}
     \centering
     \begin{subfigure}{0.45\textwidth}
           \centering
           \includegraphics[width=0.85\textwidth]{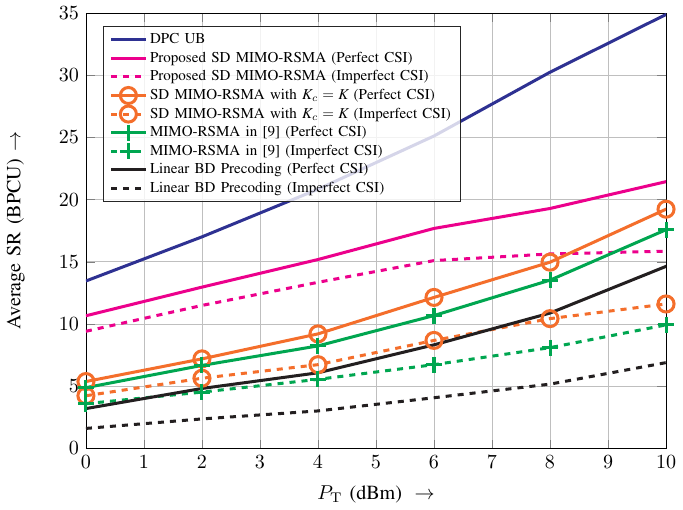}
           \caption{Correlated user MIMO channel matrices and $d_k=50 \text{ m},k=1,\dots,K.$}
            \label{fig:SR1.tikz}
     \end{subfigure}%
     \hspace{0.04\textwidth}%
     \begin{subfigure}{0.45\textwidth}
           \centering
           \includegraphics[width=0.85\textwidth]{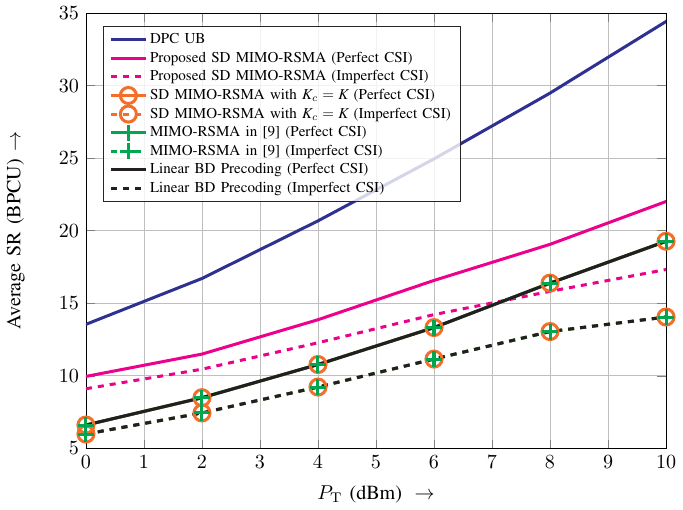}
           \caption{Uncorrelated user MIMO channel matrices and $d_k=\{250,250,50,50\} \text{ m}.$}
		   \label{fig:SR2.tikz}
    \end{subfigure}%
    \caption{Average SR vs ${P}_\mathrm{T}$ for $K=4, $ $M_k=4, k=1,\dots,K, N=16.$}
    \vspace{-0.5cm}
\end{figure}

\subsection{Results}
In Figure \ref{fig:SR1.tikz}, we show the average SR as a function of the transmit power $P_\mathrm{T}$ for the case where the MIMO channel matrices of the users are correlated. Here, the distances between the BS and the $K=4$ users are set to be equal, i.e., $d_k = 50 \text{ m},k=1,\dots,K.$ We consider the highly correlated case with $\alpha=0.8,$ and for imperfect CSI, we choose $\mu^2 = 0.1.$ From Figure \ref{fig:SR1.tikz}, we observe that, for both perfect and imperfect CSI, the proposed SD MIMO-RSMA benefits from user exclusion and significantly outperforms MIMO-RSMA in \cite{Flores2019} and linear BD precoding. This is because the proposed scheme is able to exploit the effective MIMO channel of the CM and user exclusion to enhance the achievable rate of the CM.

Next, in Figure \ref{fig:SR2.tikz}, we consider the same case as in Figure \ref{fig:SR1.tikz} but with uncorrelated MIMO channels for the users, i.e., $\alpha=0.$ Furthermore, we employ unequal user distances, i.e., $d_k = \{250, 250, 50, 50\}\text{ m}.$ From Figure \ref{fig:SR2.tikz}, we observe that, for both perfect and imperfect CSI, the proposed scheme \emph{without} user exclusion and MIMO-RSMA in \cite{Flores2019} have no performance advantage over linear BD precoding. This is because the achievable rate of the CM is negligible owing to the uncorrelated MIMO channels of the users and the two weak users with $d_k = 250 \text{ m}.$ Nevertheless, for perfect and imperfect CSI, the proposed SD MIMO-RSMA \emph{with} user exclusion outperforms both MIMO-RSMA in \cite{Flores2019} and linear BD precoding by excluding users with weak effective MIMO channels for the CM from decoding the CM, thereby enhancing the achievable rate of the CM.

\section{Conclusion}
\label{section:conc}
We considered the design of low-complexity precoding and decoding schemes for downlink MIMO-RSMA systems. To this end, to allow for element-by-element SIC and decoding at the users, we performed SD of the MIMO channel matrices of the users, for both the CM and the PMs. For the CM, SD was achieved via HO-GSVD \cite{Ponnapalli2011,VanLoan2015}, and for the PMs, BD \cite{Spencer2004} was utilized. Furthermore, in the proposed SD MIMO-RSMA scheme, users can be excluded from decoding the CM. We formulated a WSR maximization problem and found a locally-optimal solution via SCA \cite{Razaviyayn2014}. Our simulation results revealed that, for both perfect and imperfect CSI and the considered simulation scenarios with correlated and uncorrelated user MIMO channels, the proposed SD MIMO-RSMA outperformed MIMO-RSMA in \cite{Flores2019} and linear BD precoding.

In the proposed scheme, users $k \not\in \textit{K}_\mathrm{c}$ are served exclusively via the PM. Extensions of the proposed scheme where some users are served exclusively via the CM, i.e., have no PM, are also of interest for systems where some users have low rate requirement such as for \emph{Internet of Things} applications.

\bibliographystyle{IEEEtran}
\bibliography{IEEEabrv,references}

\begin{thebibliography}{10}
\providecommand{\url}[1]{#1}
\csname url@samestyle\endcsname
\providecommand{\newblock}{\relax}
\providecommand{\bibinfo}[2]{#2}
\providecommand{\BIBentrySTDinterwordspacing}{\spaceskip=0pt\relax}
\providecommand{\BIBentryALTinterwordstretchfactor}{4}
\providecommand{\BIBentryALTinterwordspacing}{\spaceskip=\fontdimen2\font plus
\BIBentryALTinterwordstretchfactor\fontdimen3\font minus
  \fontdimen4\font\relax}
\providecommand{\BIBforeignlanguage}[2]{{%
\expandafter\ifx\csname l@#1\endcsname\relax
\typeout{** WARNING: IEEEtran.bst: No hyphenation pattern has been}%
\typeout{** loaded for the language `#1'. Using the pattern for}%
\typeout{** the default language instead.}%
\else
\language=\csname l@#1\endcsname
\fi
#2}}
\providecommand{\BIBdecl}{\relax}
\BIBdecl

\bibitem{Liu2021}
Y.~Liu, S.~Zhang, X.~Mu, Z.~Ding, R.~Schober, N.~Al{-}Dhahir, E.~Hossain, and
  X.~Shen, ``Evolution of {NOMA} toward next generation multiple access
  {(NGMA)},'' \emph{arXiv preprint, arXiv:2108.04561}, Aug. 2021.

\bibitem{Liu2017}
Y.~Liu, Z.~Qin, M.~Elkashlan, Z.~Ding, A.~Nallanathan, and L.~Hanzo,
  ``Nonorthogonal multiple access for {5G} and beyond,'' \emph{Proc. IEEE},
  vol. 105, pp. 2347--2381, Dec. 2017.

\bibitem{Mao2018}
Y.~Mao, B.~Clerckx, and V.~Li, ``Rate-splitting multiple access for downlink
  communication systems: Bridging, generalizing and outperforming {SDMA} and
  {NOMA},'' \emph{EURASIP J. Wireless Commun. and Netw.}, vol. 2018, May 2018.

\bibitem{Clerckx2020}
B.~Clerckx, Y.~Mao, R.~Schober, and H.~V. Poor, ``Rate-splitting unifying
  {SDMA}, {OMA}, {NOMA}, and multicasting in {MISO} broadcast channel: A simple
  two-user rate analysis,'' \emph{IEEE Wireless Commun. Lett.}, vol.~9, pp.
  349--353, Nov. 2020.

\bibitem{Joudeh2016}
H.~Joudeh and B.~Clerckx, ``Sum-rate maximization for linearly precoded
  downlink multiuser {MISO} systems with partial {CSIT}: {A} rate-splitting
  approach,'' \emph{IEEE Trans. Commun.}, vol.~64, Aug. 2016.

\bibitem{Mao2020}
Y.~Mao and B.~Clerckx, ``Beyond dirty paper coding for multi-antenna broadcast
  channel with partial {CSIT}: A rate-splitting approach,'' \emph{IEEE Trans.
  Commun.}, vol.~68, pp. 6775--6791, Aug. 2020.

\bibitem{Spencer2004}
Q.~H. Spencer, A.~L. Swindlehurst, and M.~Haardt, ``Zero-forcing methods for
  downlink spatial multiplexing in multiuser {MIMO} channels,'' \emph{IEEE
  Trans. Signal Proc.}, vol.~52, pp. 461--471, Jan. 2004.

\bibitem{Wiesel2008}
A.~Wiesel, Y.~C. Eldar, and S.~Shamai, ``Zero-forcing precoding and generalized
  inverses,'' \emph{{IEEE} Trans. Signal Processing}, vol.~56, pp. 4409--4418,
  Aug. 2008.

\bibitem{Flores2019}
A.~R. {Flores} and R.~C. {de Lamare}, ``Linearly precoded rate-splitting
  techniques with block diagonalization for multiuser {MIMO} systems,''
  \emph{IEEE Intl. Conf. on Commun. (ICC) Wkshp.}, pp. 1--6, May 2019.

\bibitem{Ponnapalli2011}
S.~P. Ponnapalli, M.~A. Saunders, C.~F. Van~Loan, and O.~Alter, ``A
  higher-order generalized singular value decomposition for comparison of
  global m{RNA} expression from multiple organisms,'' \emph{PloS one}, Dec.
  2011.

\bibitem{VanLoan2015}
C.~F.~Van~Loan, ``Lecture 6. {T}he higher-order generalized singular value
  decomposition,'' \emph{CIME-EMS Summer School. Cornell University, Cetraro,
  Italy}, 2015.

\bibitem{Razaviyayn2014}
M.~Razaviyayn, ``Successive convex approximation: Analysis and applications,''
  \emph{Ph.D. dissertation, University of Minnesota, Minneapolis, USA}, May
  2014.

\bibitem{Vishwanath2003}
S.~Vishwanath, N.~Jindal, and A.~Goldsmith, ``Duality, achievable rates, and
  sum-rate capacity of {G}aussian {MIMO} broadcast channels,'' \emph{IEEE
  Trans. Inf. Th.}, vol.~49, pp. 2658--2668, Oct. 2003.

\bibitem{VanLoan1976}
C.~F. Van~Loan, ``Generalizing the singular value decomposition,'' \emph{SIAM
  Journal on Numerical Analysis}, vol.~13, pp. 76--83, Mar. 1976.

\bibitem{WangGiannakis2011}
X.~Wang and G.~B. Giannakis, ``Resource allocation for wireless multiuser
  {OFDM} networks,'' \emph{IEEE Trans. Inf. Theory}, vol.~57, pp. 4359--4372,
  Jun. 2011.

\bibitem{Boyd2004}
S.~Boyd and L.~Vandenberghe, \emph{Convex {O}ptimization}.\hskip 1em plus 0.5em
  minus 0.4em\relax Cambridge {U}niversity {P}ress, 2004.

\end{thebibliography}

\end{document}